\documentclass[11pt]{article}
\usepackage[utf8]{inputenc} 
\usepackage{parskip}
\usepackage{amssymb, amsmath, amsthm, graphicx, subfigure}
\usepackage{enumerate}
\usepackage[dvipsnames]{xcolor}
\usepackage[colorlinks=true,linktoc=page]{hyperref}
\usepackage{braket}
\usepackage{esvect}
\usepackage[margin=1in]{geometry}
\usepackage{mathtools}
\usepackage{tikz}
\usepackage{algorithm}
\usepackage{algorithmicx}
\usepackage[noend]{algpseudocode}

\usepackage{mdframed}

\usetikzlibrary{patterns,patterns.meta}

\theoremstyle{plain}
\newtheorem{theorem}{Theorem}[section]
\newtheorem{lemma}[theorem]{Lemma}

\theoremstyle{definition}

\theoremstyle{remark}

\newcommand{\R}{\mathbb{R}}
\newcommand{\one}{\mathbf{1}}

\newcommand{\tr}{\operatorname{tr}}

\newcommand{\diag}{\operatorname{diag}}

\newcommand{\Aobj}{\mathsf{A}}
\newcommand{\Dobj}{\mathsf{D}}
\newcommand{\Eobj}{\mathsf{E}}

\title{Hardness of A/E-Design under Partition Constraints}

\author{Nikhil Bansal\thanks{University of Michigan. Email: \texttt{bansaln@umich.edu}. Supported by NSF awards CCF-2327011 and CCF-2504995.} \and Yuze Xu \thanks{University of Michigan. Email: \texttt{yuzex@umich.edu}}}
\date{}
\begin{document}
\maketitle
\vspace{-2em}

\begin{abstract}
We consider the A/E-design problem under partition constraints: Given vectors $v_1,\ldots,v_N\in \R^d$ and a partition matroid on $[N]$, find a base $S$ of the matroid that minimizes 
$\tr(M(S)^{-1})$ or $\lambda_{\max}(M(S)^{-1})$ where $M(S)=\sum_{i \in S} v_i v_i^\top$.

In contrast to D-design, where good estimation and approximation guarantees are known as a function of $d$ \cite{MadanNikolovSinghTantipongpipat2020,BLPST22, BLPS22}, we show that no reasonable approximation exists for A/E-design. This answers a question of Brown, Laddha and Singh \cite{BLS24}. The proof is based on an elementary reduction from three-dimensional matching.
\end{abstract}

\section{Introduction}

In an experimental design problem, we are given vectors
$v_1,\ldots,v_N\in\R^d$. The goal is to choose a subset $S \subset [N]$, subject to some combinatorial constraints, so that the resulting matrix  
$M(S):=\sum_{i\in S}v_iv_i^\top$ is far from singular. 
 
This is a central problem in statistics \cite{Pukelsheim2006}, where
the vectors $v_i$ correspond to measurements of an unknown linear model, and
$M(S)^{-1}$ corresponds to the covariance of the
least-squares estimator. More recently, it has also found several applications in
machine learning, numerical linear algebra and graph algorithms, see e.g., \cite{SinghXie2018, AZLSW21, NST22, LZ22, Lau0025} and references therein. 

The constraints on the subset
$S$ are typically cardinality-based ($|S|= k$) or involve more general matroid constraints.
The three widely studied optimization criteria for $M(S)$ are D/A/E-design.  We use their
minimization forms
\[
\Dobj(S):=\det(M(S)^{-1})^{1/d},\qquad
\Aobj(S):=\tr(M(S)^{-1}),\qquad
\Eobj(S):=\lambda_{\max}(M(S)^{-1}),
\]
with infinite cost when $M(S)$ is singular.  If
$\mu_1,\ldots,\mu_d$ are the eigenvalues of $M(S)^{-1}$, these objectives are,
up to normalization, the power means of the $\mu_i$ for
$p=0,1,\infty$, respectively.

\paragraph{Cardinality constraints.}
The cardinality setting, where $S$ is any arbitrary subset of size $k\ge d$, has been studied
extensively, with non-trivial results for all $D/A/E$-design objectives. When $k=d$, the best known approximation factors are $e$ for
D-design, $d$ for A-design, and $O(d^2)$ for E-design \cite{NST22}.  For larger $k$, improved guarantees have been obtained using a variety of techniques such as proportional volume sampling  \cite{SinghX20,NST22},  regret-minimization \cite{AZLSW21}
and local-search \cite{MadanSinghTantipongpipatXie2019,LZ22}. Recently, Lau, Wang, and Zhou \cite{Lau0025} give an elegant unified way to obtain the known bounds using the interlacing polynomial method of Marcus, Spielman and Srivastava \cite{MSS15a, MSS15b, MSS22}.

\paragraph{Matroid constraints and D-design.}
For more general matroid constraints, a rich theory has been developed  for D-design, based on ingenious relaxations and rounding techniques. In a seminal work, Nikolov and Singh \cite{NS16} considered partition matroids, and introduced a novel max--min geometric program
and gave an $e$-estimation algorithm for $k=d$ using stable-polynomial methods. These results were generalized in an interesting line of
work based on real-stable and log-concave polynomials
\cite{AG17,AGV18} to give an $O(1)$-estimation algorithms for arbitrary matroids, for $k=d$. Later, for the harder $k>d$ setting, 
\cite{MadanNikolovSinghTantipongpipat2020} obtained an $O(1)$-estimation algorithm for partition matroids, and a $\mathrm{poly}(d)$-estimation for arbitrary matroids. Algorithms that return an actual solution with $\mathrm{poly}(d)$ guarantee have also been obtained more recently \cite{BLPST22,BLPS22}.

Given this remarkable progress for D-design from cardinality constraints to more general matroid constraints, it is natural to ask if analogous progress is also possible for A- and E-design. This question was explored recently by Brown, Laddha and Singh \cite{BLS24}, who studied E-design with partition constraints. They showed that the problem  has interesting connections to the Kadison--Singer \cite{MSS15b} and Santa Claus problems \cite{BansalS06, ChakrabartyCK09},
and gave non-trivial results in the special case when $d$ is fixed, or when all the vectors have small leverage scores. 
Motivated by these connections and results, they ask whether the problem admits reasonable approximation algorithms in general.

\subsection{Our result}
In this work, we answer the question above  negatively and show the following hardness result.

\begin{theorem}
\label{cor:precision}
Let $d$ denote the dimension, and $B$ the maximum bit length of any coordinate\footnote{All coordinates with be integers in our construction.}
of an input vector.
Assuming $\mathsf{P} \neq \mathsf{NP}$, 
for any
$\varepsilon\in(0,1)$, there is no efficient algorithm with approximation guarantee $2^{\mathrm{poly}(d)+(1-\varepsilon)B}$  for A/E-design with partition constraints.
\end{theorem}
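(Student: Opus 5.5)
We reduce from three-dimensional matching (3DM): given disjoint sets $X,Y,Z$ with $|X|=|Y|=|Z|=n$ and a family $T\subseteq X\times Y\times Z$ of triples, decide whether some $n$ triples cover every element exactly once. The partition matroid will force exactly one choice per element of $X$. For each $x\in X$, the part $P_x$ consists of one vector $v_t$ per triple $t=(x,y,z)\in T$. A base is therefore a map $x\mapsto t(x)$.

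The vectors live in $\R^d$ with $d=O(n)$, with coordinates indexed by $Y\cup Z$ plus a small number of auxiliary coordinates. The aim is a gap statement:
\begin{itemize}
\item if the chosen triples form a perfect matching, $M(S)$ is well conditioned, with $\lambda_{\min}(M(S))\ge 1/\mathrm{poly}$;
\item if some $y\in Y$ (or $z\in Z$) is uncovered, $M(S)$ is singular or has $\lambda_{\min}\le 2^{-\Omega(B)}$.
\end{itemize}
Since $\Eobj(S)\le\Aobj(S)\le d\,\Eobj(S)$, a single gap handles both objectives.

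The naive choice $v_t=e_y+e_z$ has a flaw: an unmatched (uncovered) coordinate yields an exactly singular $M(S)$, but the arrangement can also be singular for other reasons. The cleaner form is $v_t=K e_y + e_z$ with a large integer $K=2^{B}$, or a two-vectors-per-triple scheme. More robustly, I would make the covering condition pure: add auxiliary parts so that each coordinate $e_y$ and $e_z$ must be hit. For example, let each triple contribute to a block where $M(S)$ restricted to $\mathrm{span}\{e_y\}$ is $\sum_{t\ni y} (\text{weight})^2$. If the vectors are chosen so that $M(S)$ is block diagonal up to controlled off-diagonal terms, then an uncovered coordinate gives an exactly zero eigenvalue, that is, infinite cost. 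Exact singularity would give unbounded inapproximability, which is stronger than needed.

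The honest difficulty is that $d$ vectors with $d$ rows must be spread over $2n$ coordinates while only $n$ vectors are chosen. So each chosen vector must "cover" two coordinates, and rank $2n$ is impossible with only $n$ vectors. This explains the $2^{(1-\varepsilon)B}$ form of the bound, and why the construction must use large integer weights rather than exact singularity. The plan is therefore to use additional fixed parts (singleton parts, always chosen) contributing vectors such as $e_y-e_z$-type or $K$-scaled auxiliary vectors. These make rank complete in all cases but leave a direction of eigenvalue $\approx K^{-2}$ exactly when the matching fails. In the yes case, the matched pairs $\{v_t\}$ together with the fixed vectors span with eigenvalues $\Omega(1)$ and entries $O(K)$, giving $\Aobj\le \mathrm{poly}(d)$. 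In the no case, take an explicit test vector $u$ supported on an uncovered coordinate and its partner and show $u^\top M(S)u\le \mathrm{poly}(d)/K^{2}$, so that $\Eobj\ge K^2/\mathrm{poly}(d)$. Setting $K=2^{B}$ then gives gap $2^{2B-O(\log d)}$, and padding the dimension turns it into the stated $2^{\mathrm{poly}(d)+(1-\varepsilon)B}$ form, where $B$ can be chosen polynomially larger than $d$.

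The main obstacle is designing the fixed auxiliary vectors so that the yes case stays well conditioned uniformly, with no spurious near-null direction created by the interplay of $K$-scaled entries, while every no instance still has a $K^{-2}$ direction. I would first try a triangular (Gaussian-elimination-friendly) layout, ordering coordinates so that in the yes case $M(S)=AA^\top$ with $A$ unit triangular up to permutation and bounded entries. That gives $\|A^{-1}\|\le 2^{O(d)}$, which is exactly the $2^{\mathrm{poly}(d)}$ slack allowed in the theorem.
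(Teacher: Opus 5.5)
Your proposal has a genuine gap: it never constructs the instance. All of the content of the theorem lies in the fixed auxiliary vectors. They must make every base full rank, keep $V_S^{-1}$ small when the chosen triples form a perfect matching, and still leave a direction with $u^\top M(S)u\le \mathrm{poly}(d)/K^2$ whenever coverage fails. You explicitly leave this open. None of the ingredients you float comes with a mechanism that turns the \emph{global} condition $Ax=\one_m$ into a conditioning gap: not $v_t=Ke_y+e_z$, not the ``$e_y-e_z$-type'' vectors, and not a test vector supported on an uncovered coordinate and its partner. The triangular fallback does not rescue this either. For a unit-triangular matrix whose entries have size $K$, the generic bound on the inverse is $K^{O(d)}$, which swamps a $K^2$ gap. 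What you need is an exact inverse whose only large entries cancel precisely in the YES case. (Also, exact singularity is excluded for a more basic reason than your dimension count. Whether some base is linearly independent is a matroid intersection question between a linear and a partition matroid, so it is decidable in polynomial time.)

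The missing idea is a single fixed witness vector that couples all coordinates.
\begin{itemize}
\item Give each triple a private coordinate $u_i$ and let its part be $\{u_i,\,u_i+Ra_i\}$.
\item Keep each $e_j$ as a singleton part.
\item Add the fixed vector $w=h+R\sum_i u_i+R^2\sum_j e_j$, with a fresh coordinate $h$.
\end{itemize}
In any base the expansion of $h$ is forced. The vector $w$ gets coefficient $1$, since only it touches $h$. The private coordinates then force coefficient $-R$ on each chosen triple vector, and the residual is $R^2\sum_j((Ax)_j-1)e_j$.

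Order the coordinates as $h,u,e$ and the chosen vectors as $w,g,e$. Then $V_S$ is block unit lower-triangular with only three levels, so its inverse is explicit. All entries are $O(R)$ except the block $R^2(Ax-\one_m)$, which vanishes exactly in the YES case. This gives $\Aobj\le 2dR^2$ for a perfect matching and $\Eobj\ge R^4$ otherwise. Since the largest entry is $R^2$, the gap is $R^2/(2d)=2^{B-O(\log d)}$.

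To finish, take $R=2^{d^{c+1}}$ so that $B$ dominates every fixed polynomial in $d$. Padding the dimension, as you suggest, would only inflate the $\mathrm{poly}(d)$ term.

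The same gadget would complete your one-part-per-$x$ layout. Use private coordinates $u_x$ and parts $P_x=\{u_x+R(e_y+e_z):(x,y,z)\in T\}$. Keep fixed singletons $e_y,e_z$, and set $w=h+R\sum_x u_x+R^2\sum_{j\in Y\cup Z}e_j$.
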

Thus in contrast to D-design, the A/E-design problem becomes substantially harder beyond cardinality constraints, and is unlikely to admit any reasonable approximation algorithms.

The proof of Theorem \ref{cor:precision} is elementary, and is based on a reduction from the classical three-dimensional matching problem (3-DM), which is well-known to be NP-complete \cite{Karp72}. 

\section{Preliminaries}
We begin by describing the problem formally.

We are given vectors
$v_1,\ldots,v_N\in\R^d$, and a partition matroid on $[N]$ with parts
$P_1,\ldots,P_\ell$, and positive integers $b_1,\ldots,b_\ell$.
A base in this matroid is any set $S$ satisfying $
|S\cap P_a|=b_a$ for every $a\in[\ell]$. We call $k=\sum_a b_a$ the rank of the matroid, and assume that $k\geq d$, as $M(S)$ is always singular if $k<d$. 

Our hard instances will consist of $d$
parts with $b_a=1$ for every part, and thus $k=d$. Notice that the same hardness holds when 
$k>d$, as one may add extra parts containing zero vectors, without affecting anything.

For any subset $S$ of $d$ vectors,  let
$V_S \in\R^{d\times d}$
be the matrix with columns $v_i$ for $i\in S$, and let
\[M(S):= \sum_{i\in S} v_i v_i^\top =  V_SV_S^\top.\]
The A/E objectives can be expressed in terms of $ V_S$ as
\begin{equation}\label{eq:objectives-inverse}
\Aobj(S)= \tr (M(S)^{-1}) = \|V_S^{-1}\|_F^2,
\qquad
\Eobj(S)=  \|M(S)^{-1}\|_{\mathrm{op}} = \|V_S^{-1}\|_{\mathrm{op}}^2.
\end{equation}
Here and throughout, we assume that $V_S$ and $M(S)$ are always invertible, as these objectives are trivially infinite otherwise. Note that we always have 
$\Eobj(S)\le \Aobj(S)\le d\,\Eobj(S)$, and thus it suffices to show hardness for either of these objectives.

\paragraph{Geometric Interpretation.} 
Both $\Eobj(S)$ and $\Aobj(S)$ in \eqref{eq:objectives-inverse} also have a nice geometric interpretation, which will be useful to understand our idea more intuitively. 

Fix some set $S \subset [N]$ of size $d$. Suppose we want to
express a target vector $h \in \R^d$ with the columns in $V_S$ as the basis, so that $h = V_S\, y$. Then the unique coefficient
vector is
$y:= y(h) = V_S^{-1}h$, and thus
\begin{equation}
\label{eq:geom}
\Eobj(S)=\max_{\|h\|_2=1}\|V_S^{-1}h\|_2^2 = \max_{\|h\|_2=1} \|y(h)\|_2^2,
\end{equation}
measures the worst possible blow up of coefficients, in a $\ell_2$ sense, over all unit vectors. Similarly, one has that $\Aobj(S)=\sum_{j=1}^d\|V_S^{-1}u_j\|_2^2$, for every orthonormal basis $u_1,\ldots,u_d$.

This view is useful to certify a large lower bound for $\Eobj(S)$, by simply exhibiting some
unit vector $h$ for which the vector  $V_S^{-1}h$ is large.  

\paragraph{3-dimensional Matching (3-DM).}
In an instance $\mathcal{I}$ of 3-DM, we are given three disjoint
sets $X,Y,Z$, each of size $q$, and a family of triples
$\mathcal{T}\subseteq X\times Y\times Z$.  The goal is to determine whether there are
$q$ triples in $\mathcal{T}$ that cover each element of $X\cup Y\cup Z$ exactly once. We call this a perfect matching.

Let us denote $m=|X\cup Y \cup Z|=3q$, and $n=|\mathcal{T}|$. Let
$A\in\{0,1\}^{m\times n}$ denote the incidence matrix of the triples. Then
each column of $A$ has exactly three ones and a perfect matching exists iff
there is an $x\in\{0,1\}^n$ such that
\begin{equation}\label{eq:exact-cover}
Ax=\one_m.
\end{equation}

\section{The Reduction and Hardness}

Given an instance $\mathcal{I}$ of 3-DM, we will construct a collection of $N=m+2n+1$ vectors in $\R^d$ with $d=m+n+1$, and a partition matroid on $[N]$ with rank $d$. 

As usual, we want the following two properties from the reduction:  

\,\,\,\,\, (i) Completeness: For every perfect matching in $\mathcal{I}$, there is some base $S$ with $\|V_S^{-1}\|_{\mathrm{op}}$ small. 

\,\,\,\,\, (ii) Soundness: If $\mathcal{I}$ has no perfect matching, $\|V_S^{-1}\|_{\mathrm{op}}$ is large for every base $S$.

\medskip

The high-level idea is the following.

 By \eqref{eq:exact-cover}, $\mathcal{I}$ has a perfect matching iff the vector $Ax-\one_m = \mathbf{0}$ for some $x \in \{0,1\}^n$. 
 In our reduction, the possible bases $S$ in the matroid will be in bijection with $x \in \{0,1\}^n$, and the vectors $v_i$ will be designed so that the matrix $V_S^{-1}$ contains the vector $R^2(Ax-\one_m)$ inside it, where $R$ is some scalar, while the other entries of $V_S^{-1}$ have magnitude $O(R)$.
 By making $R$ arbitrarily large, 
 we obtain the desired hardness gap.
 
We now give the details.

\subsection{The Construction}
Let
 the dimension $
d:=n+m+1$.
We divide the standard coordinate basis for $\R^d$, into three pieces:
\[e_1,\ldots,e_m,\ u_1,\ldots,u_n,\ h.
\]
Here
 each $e_j$ for $j\in [m]$ will correspond to an element in $X \cup Y \cup Z$, each $u_i$ for $i\in [n]$  to a triple in $\mathcal{T}$, and finally, the coordinate $h$ will be special.

There are three types of vectors and parts:
\begin{enumerate}
    \item (Element Part.) For every $j\in[m]$, there is a part containing only a single basis vector $e_j$. So there is no choice here.
\item (Triple Part.)
Fix an integer $R\ge1$.  For each triple $i\in[n]$, create a part 
$P_i:=\{u_i,\ u_i+Ra_i\}$,
where $u_i$ is the basis vector above, and $a_i:=\sum_{j=1}^m A_{ji}e_j$ is
the indicator of the triple $i$.
\item (Special Part.)
A part with the single vector \[w:=h+R\sum_{i=1}^n u_i+R^2\sum_{j=1}^m e_j.\]
\end{enumerate}
Notice that there are exactly
$d=m+n+1$ parts, and $m+2n+1$ vectors in $\R^d$.

\subsection{Analysis} 
Any base $S$ is completely determined by specifying whether the first or the second vector is chosen in each part $P_i$, for $i\in [n]$, corresponding to triples.
So each base $S$ is in bijection with $x\in\{0,1\}^n$, where we choose the second
vector from $P_i$ if $x_i=1$, and the first if $x_i=0$. 
Using this notation, the chosen vector from $P_i$ can be written as   
\begin{equation}
    \label{eq:gi}
    g_i:=u_i+Rx_i a_i = u_i +R x_i \sum_{j=1}^m A_{ji}e_j.
\end{equation}
 The key point of this construction is that to express $h$ using the vectors in $S$ encodes
 $Ax-\one_m$. In particular, for any base $S$ (equivalently $x \in\{0,1\}^n$), we have a unique representation of $h$ using vectors in $S$:
\begin{equation}\label{eq:witness-identity}
h 
=w-R\sum_{i=1}^n g_i
 +R^2\sum_{j=1}^m\bigl((Ax)_j-1\bigr)e_j.
 \end{equation}
Indeed, the coefficients in this representation are forced: the $h$ coordinate only appears in  $w$ and hence the coefficient $w$ must be one. Now, to cancel the terms $Ru_i$ in $w$, as each $u_i$ only appears in $P_i$, the coefficient of each $g_i$ is forced to $-R$. Finally, the residual term follows as
\begin{align} w - h - R \sum_{i=1}^n g_i =  R^2 (\sum_{j=1}^m  e_j - \sum_{i=1}^n \sum_{j=1}^m A_{ji} x_i e_j) = R^2 \sum_{j=1}^m  \bigl(1-(Ax)_j\bigr)e_j  \end{align}
where the first equality uses \eqref{eq:gi}.

So if $Ax=\one_m$, then the residual disappears, and $h$ is generated using
coefficients of magnitude at most $R$.  If $Ax\ne\one_m$, then
$Ax-\one_m$ is a nonzero integer vector, so the representation of $h$
has magnitude at least $R^2$.
By the geometric interpretation in \eqref{eq:geom}, this already gives an $\Omega(R^4)$ E-cost in the NO case. For the YES case,
we must also control the representations of all other directions; this follows
from the exact inverse below.

\subsection*{The inverse matrix and the gap}
Fix a base $S$, specified by some $x \in \{0,1\}^n$, and 
let $A_x:=A\diag(x) \in \R^{m\times n}$.  Ordering the selected columns as
$g_1,\ldots,g_n,w,e_1,\ldots,e_m$,
we can write the matrix $V_S  = V_x $ in the block form
\begin{equation}\label{eq:Vx}
V_x=
\begin{pmatrix}
I_n & R\one_n & 0\\
0   & 1        & 0\\
RA_x & R^2\one_m & I_m
\end{pmatrix}.
\end{equation}
Here the top left, bottom right and middle blocks  have dimensions $n\times n$, $m\times m$ and $1\times 1$ respectively.
\begin{lemma}[Exact inverse]\label{lem:inverse}
For every $x\in\{0,1\}^n$,
\begin{equation}\label{eq:Vx-inverse}
V_x^{-1}=
\begin{pmatrix}
I_n & -R\one_n & 0\\
0   & 1         & 0\\
-RA_x & R^2(Ax-\one_m) & I_m
\end{pmatrix}.
\end{equation}
\end{lemma}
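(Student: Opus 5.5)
The plan is a direct verification: call the claimed matrix in \eqref{eq:Vx-inverse} $W_x$, multiply it against $V_x$ from \eqref{eq:Vx} blockwise, and check that $V_xW_x=I_d$. Since both matrices are square, a one-sided inverse is automatically two-sided, so this settles the lemma. Both matrices share the $(n,1,m)$ block partition, in the same column ordering $g_1,\ldots,g_n,w,e_1,\ldots,e_m$, so block multiplication is legitimate. The result has nine blocks, each a sum of at most three products.

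I will compute them row-block by row-block. For the first row-block $(I_n,\ R\one_n,\ 0)$ of $V_x$:
\begin{itemize}
\item against the first column-block of $W_x$ we get $I_n$;
\item against the second we get $-R\one_n+R\one_n\cdot 1=0$;
\item against the third we get $0$.
\end{itemize}
The middle row-block $(0,1,0)$ simply picks out the middle row of $W_x$, which is $(0,1,0)$, as required.

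For the last row-block $(RA_x,\ R^2\one_m,\ I_m)$:
\begin{itemize}
\item the first column-block gives $RA_x-RA_x=0$;
\item the third gives $I_m$;
\item the second gives $-R^2A_x\one_n+R^2\one_m+R^2(Ax-\one_m)$.
\end{itemize}

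The only step that is not pure bookkeeping is this last block. It needs the identity $A_x\one_n=A\diag(x)\one_n=Ax$, after which the expression collapses to $-R^2Ax+R^2\one_m+R^2Ax-R^2\one_m=0$. This is exactly the residual computation in \eqref{eq:witness-identity}, and it confirms that the middle column of $W_x$ is the representation of $h$ found there.

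As a sanity check, I note that this middle column should be $V_x^{-1}h$, and it reads $(-R\one_n,1,R^2(Ax-\one_m))$, which is consistent with \eqref{eq:witness-identity}. No real obstacle is expected; the only care needed is keeping the block dimensions and the ordering of columns consistent with \eqref{eq:Vx}.
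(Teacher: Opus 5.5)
Your proposal is correct and is the paper's proof: you multiply the two block matrices directly, every block is immediate except the lower middle one, and that block vanishes because $A_x\one_n = Ax$. Your remark that a one-sided inverse of a square matrix is automatically two-sided is a detail the paper leaves implicit.
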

\begin{proof}
This can be verified directly by
multiplying \eqref{eq:Vx} and \eqref{eq:Vx-inverse} and noting that $A_x\one_n=Ax$. Every block is immediate
except the lower middle block, which is
\[
-R^2A_x\one_n+R^2\one_m+R^2(Ax-\one_m)=0. \qedhere
\]
\end{proof}

Taking the squared Frobenius norm in \eqref{eq:Vx-inverse} gives the following
exact expression.

\begin{lemma}[Completeness and soundness]\label{lem:gap}
For every $x\in\{0,1\}^n$,
\begin{equation}\label{eq:A-identity}
\Aobj(x) = \|V_x^{-1}\|_F^2
=d+nR^2+R^2\|A_x\|_F^2+R^4\|Ax-\one_m\|_2^2.
\end{equation}
In particular if $Ax=\one_m$, then
$\Eobj(x)\le \Aobj(x)\le 2dR^2$. Else if
 $Ax\ne\one_m$, then
$\Aobj(x)\ge \Eobj(x)\ge R^4$.
\end{lemma}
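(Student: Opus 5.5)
The plan is to read off the squared Frobenius norm of $V_x^{-1}$ block by block from Lemma~\ref{lem:inverse}, using the identity $\Aobj(x)=\|V_x^{-1}\|_F^2$ from \eqref{eq:objectives-inverse}. The squared Frobenius norm of a block matrix is the sum of the squared Frobenius norms of its blocks, so the blocks of \eqref{eq:Vx-inverse} contribute as follows:
\begin{itemize}
\item the identity blocks $I_n$, $1$, $I_m$ give $n+1+m=d$;
\item the block $-R\one_n$ gives $nR^2$;
\item the block $-RA_x$ gives $R^2\|A_x\|_F^2$;
\item the block $R^2(Ax-\one_m)$ gives $R^4\|Ax-\one_m\|_2^2$;
\item the zero blocks give nothing.
\end{itemize}
Summing these yields \eqref{eq:A-identity}. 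This step is purely mechanical once the inverse is known.

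For completeness, assume $Ax=\one_m$. The last term of \eqref{eq:A-identity} vanishes. Since $A$ has exactly three ones per column, $\|A_x\|_F^2 = 3|x|_1\le 3n$. Because each element is covered exactly once, in fact $3|x|_1=m$, so $\|A_x\|_F^2=m$. Hence
\[
\Aobj(x)=d+nR^2+mR^2\le d+(d-1)R^2\le dR^2+dR^2 = 2dR^2,
\]
where the last inequality uses $R\ge 1$. The bound $\Eobj\le\Aobj$ was noted in the preliminaries, so it also gives $\Eobj(x)\le 2dR^2$.

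For soundness, the inequality $\Aobj\ge\Eobj$ is again from the preliminaries. The point is to lower bound $\Eobj$ itself, not just $\Aobj$. I would use the geometric interpretation \eqref{eq:geom} with the unit vector $h$, which is the $(n+1)$-st coordinate in the column ordering of $V_x$. Then $V_x^{-1}h$ is the middle column of \eqref{eq:Vx-inverse}, namely $(-R\one_n,\,1,\,R^2(Ax-\one_m))$. Its squared norm is at least $R^4\|Ax-\one_m\|_2^2$.

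The only substantive observation is that $Ax-\one_m$ is a nonzero integer vector when $Ax\ne\one_m$. Its squared norm is therefore at least $1$, which gives $\Eobj(x)\ge R^4$. I expect no real obstacle here. The one care point is keeping the coordinate ordering consistent: rows of $V_x^{-1}$ are indexed by the chosen vectors and columns by the coordinates $u,h,e$, matching the column ordering $g_1,\ldots,g_n,w,e_1,\ldots,e_m$ of \eqref{eq:Vx}. With that ordering, the middle column is exactly the coefficient vector of $h$ already computed in \eqref{eq:witness-identity}, which serves as a consistency check.
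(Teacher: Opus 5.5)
Your proposal is correct and follows the paper's proof essentially step for step. Like the paper, you sum the squared block entries of the exact inverse, and you use three ones per column together with $Ax=\one_m$ to get $\|A_x\|_F^2=m$ in the YES case. You then lower-bound $\Eobj$ via the $h$-column of $V_x^{-1}$, whose lower block $R^2(Ax-\one_m)$ is a nonzero integer vector scaled by $R^2$.
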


\begin{proof}
Equation \eqref{eq:A-identity} is obtained by summing the squared entries of
\eqref{eq:Vx-inverse}.  
As each column of $A$ has three ones,
$\|A_x\|_F^2=3\sum_i x_i$.
If $Ax=\one_m$, summing all coordinates gives $3\sum_i x_i=m$.
Substituting this into \eqref{eq:A-identity} gives the first claim.

If $Ax\ne\one_m$, the vector $Ax-\one_m$ is a nonzero integer
vector, and has squared norm at least one.  The column of
$V_x^{-1}$ corresponding to coordinate $h$ has lower block
$R^2(Ax-\one_m)$, which has squared Euclidean norm is at least $R^4$, and hence
\[
\Eobj(x)=\|V_x^{-1}\|_{\mathrm{op}}^2\ge R^4.
\qedhere \]
\end{proof}

\begin{proof}[Proof of Theorem~\ref{cor:precision}]
Fix any $\varepsilon \in (0,1)$ and integer $c>0$, and apply the construction above with $R:= 2^{d^{c+1}}$.
The largest coordinate in the construction is $R^2$, and thus $B=2d^{c+1} +O(1)$.

By Lemma \ref{lem:gap}, the ratio between optimum values for the NO and YES instances is at least $R^2/(2d)
=2^{B-O(\log d)}$.
For $d$ large enough, we
have
$d^c+(1-\varepsilon)B < B-O(\log d)$.
As the coordinates have $B=O(d^{c+1})$ bits, the
encoding length and the running time of the reduction are polynomial in the size of the $3$-DM instance. As any polynomial is eventually dominated by $d^c$ for suitable $c$, we obtain that, assuming $\mathsf{P}\neq \mathsf{NP}$,
no polynomial time $2^{\mathrm{poly}(d)+(1-\varepsilon)B}$-approximation exists for A/E-design.
\end{proof}

\section{Declaration of Generative AI Use}
The authors used GPT-5.6 Pro during the development of this
work to simplify their original construction and assist with verification. GPT was not used in any part of the exposition, and the authors take full responsibility for the content and correctness.

\bibliographystyle{alpha}   
\bibliography{ref}
\end{document}